\theoremstyle{plain}
\newtheorem{theorem}{Theorem}
\newtheorem{thm}{\protect\theoremname}
\theoremstyle{remark}
\newtheorem{rem}[thm]{\protect\remarkname}
\newtheorem{definition}{Definition}
\newtheorem{example}[thm]{\protect\examplename}
\providecommand{\examplename}{Example}
\providecommand{\remarkname}{Remark}
\providecommand{\theoremname}{Theorem}
\begin{document}

\title[Fractional statistical dynamics and fractional kinetics]
    {Fractional statistical dynamics and fractional kinetics}

\author{Jos{\'e} Lu{\'\i}s da Silva}
\address{CCM, University of Madeira, Campus da Penteada, 9020-105
Funchal, Portugal}
\email{luis@uma.pt}

\author{Anatoly N. Kochubei}
\address{Institute of Mathematics, National Academy of Sciences of Ukraine,
3 Tereshchenkivs'ka, Kyiv, 01601, Ukraine}
\email{kochubei@imath.kiev.ua}

\author{Yuri Kondratiev}
\address{Department of Mathematics, University of Bielefeld,
 D-33615 Bielefeld, Germany}
\email{kondrat@mathematik.uni-bielefeld.de}

\subjclass[2000]{Primary 82C21; Secondary 34A08}
\date{17/03/2016;\ \  Revised  03/04/2016}
\keywords{Configuration space, Caputo derivative,
Vlasov-type kinetic equation, correlation functions, Poisson flow.}

\begin{abstract}
We apply the subordination principle to  construct
kinetic fractional statistical dynamics in the continuum in terms of  solutions   to
Vlasov-type
hierarchies.  As a by-product we obtain the evolution
of the density of particles in the fractional kinetics in terms of
a non-linear Vlasov-type
kinetic equation. As an application we study the intermittency of
the fractional mesoscopic dynamics.
\end{abstract}

\maketitle

\section{Introduction}

A general scheme for the study of  Markov dynamics for interacting particle systems (IPS for short) in  the continuum includes the following steps. We start with a heuristic Markov generator  $L$
defined on functions over a configuration space of the system.
Associated with this generator is
a forward Kolmogorov equation  for states of the system (a.k.a.\ a Fokker--Planck equation (FPE)).
A solution to this equation gives the so-called statistical dynamics of the model under
consideration \cite{FKK12}.
A constructive approach to the existence and uniqueness problem for
solution of the FPE and for
the analysis of its properties exploits
the possibility of writing
this equation as a hierarchical chain of evolution equations for time dependent correlation functions
\cite{FKK12, FKK15}.  This step corresponds to a microscopic description of the system.

A mesoscopic  level of the study is related with a Vlasov-type scaling limit
for the dynamics that leads to a kinetic or Vlasov hierarchy for correlation functions. This scaling limit destroys the Markov
property of the evolution
of the limiting Vlasov--Fokker--Planck equation (VFPE): for an initial probability
measure the solution, in general, is no longer a measure.
Still, the resulting dynamics has a conditional Markov property in the following
sense. If we  start with a Poisson
initial state then the solution
of the VFPE will be given by a flow of Poisson measures on the configuration space. In
theoretical physics this fact is known as the chaos  propagation  property.

The Poisson flow which appears  in the Vlasov limit is completely characterized by the density
function $\rho_t(x)$, which corresponds to the Poisson measure from the
flow at time $t\geq 0$.
A specific feature of the mesoscopic limit is a non-linear Vlasov-type kinetic equation for this density.  In
most cases this equation may be informally derived directly from the form of the
generator $L$.
However, a rigorous realization of the above scheme is a non-trivial task
for each particular model \cite{FKK10, FKK12, FKKK15}.  The study of the resulting kinetic equations
for concrete  Markov dynamics of interacting particle systems in the continuum
belongs to the general theory of non-local non-linear evolution equations which has
been under active development in recent years.

The aim of the present paper is to extend the concept of statistical dynamics and related structures to the case of fractional time derivatives. From the probabilistic point of view this means that we leave the Markov dynamical framework by introducing a random time change in the corresponding Markov process -- see for example
\cite{Mura_Taqqu_Mainardi_08, Mainardi97}.   In the language of functional
analysis we are no more in the arena of semigroup evolutions.

Below we discuss the concept of a fractional Fokker--Plank equation (FPE) and the
related fractional statistical dynamics, which is still an evolution
in the space of probability measures on the configuration space.
The mesoscopic scaling of the generator of this evolutions leads  to the same result as
for the initial FPE. The latter leads us to the concept of  a fractional VFPE. A subordination principle provides for the representation of
the solution to this equation as a flow of measures that is a transformation
of a Poisson flow for the initial VFPE.
Note that the density function for the fractal kinetics is a subordination of the solution to the initial Vlasov equation. This density characterizes the kinetic behavior of the fractional statistical dynamics, but it is
not the same as the solution to the Vlasov equation with a fractional time  derivative, as is typically assumed in theoretical physics.

In this paper we leave open the problem of rigorous realization of scaling approach for particular models. Instead,
our considerations are focused on questions about the properties of subordinated flows. In particular, we clarify
the possibility of having time-dependent random point processes with an asymptotic intermittency property as a result
of subordination of Poisson flows.

\section{Preliminaries}

Let ${\mathcal{B}}({{\mathbb{R}}^{d}})$ be the family of all Borel
sets in ${{\mathbb{R}}^{d}}$, $d\geq1$ and let ${\mathcal{B}}_{b}({{\mathbb{R}}^{d}})$
denote the system of all bounded sets in ${\mathcal{B}}({{\mathbb{R}}^{d}})$.

The space of $n$-point configurations in an arbitrary
$Y\in{\mathcal{B}}({{\mathbb{R}}^{d}})$
is defined by
\[
\Gamma_{Y}^{(n)}:=\big\{\eta\subset Y\big|\;|\eta|=n\big\},\quad n\in{\mathbb{N}}.
\]
We also set
$\Gamma_{Y}^{(0)}:=\{\emptyset\}$. As a set, $\Gamma_{Y}^{(n)}$
may be identified with the symmetrization of
\[
\widetilde{Y^{n}}=\big\{(x_{1},\ldots,x_{n})\in Y^{n}\big|x_{k}\neq x_{l}\text{ if }k\neq l\big\}.
\]

The configuration space over the space ${{\mathbb{R}}^{d}}$ consists
of all locally finite subsets (confi\-gurations) of ${{\mathbb{R}}^{d}}$,
namely,
\begin{equation}
\Gamma=\Gamma_{{\mathbb{R}}^{d}}:=\big\{\gamma\subset{{\mathbb{R}}^{d}}\big||\gamma\cap\Lambda|<\infty,\text{ for all }\Lambda\in{\mathcal{B}}_{\mathrm{b}}({{\mathbb{R}}^{d}})\big\}.\label{eq:conf_space}
\end{equation}
The space $\Gamma$ is equipped with the vague topology, i.e., the
minimal topology for which all mappings $\Gamma\ni\gamma\mapsto\sum_{x\in\gamma}f(x)\in{\mathbb{R}}$
are continuous for any continuous function $f$ on ${{\mathbb{R}}^{d}}$
with compact support. Note that the summation in $\sum_{x\in\gamma}f(x)$
is taken over only finitely many points of $\gamma$ belonging
to the support of $f$. It was shown in \cite{Kondratiev2006} that
with the vague topology $\Gamma$ may be metrizable and it becomes
a Polish space (i.e., a complete separable metric space). Corresponding
to this topology, the Borel $\sigma$-algebra ${\mathcal{B}}(\Gamma)$
is the smallest $\sigma$-algebra for which all mappings
\[
\Gamma\ni\gamma\mapsto|\gamma_{\Lambda}|\in{\mathbb{N}}_{0}:={\mathbb{N}}\cup\{0\}
\]
 are measurable for any $\Lambda\in{\mathcal{B}}_{b}({{\mathbb{R}}^{d}})$.
Here $\gamma_{\Lambda}:=\gamma\cap\Lambda$, and $|\cdot|$ the cardinality of a finite set.

It follows that one can introduce the corresponding Borel $\sigma$-algebra,
which we denote by ${\mathcal{B}}(\Gamma_{Y}^{(n)})$. The space of
finite configurations in an arbitrary $Y\in{\mathcal{B}}({{\mathbb{R}}^{d}})$
is defined by
\[
\Gamma_{0,Y}:=\bigsqcup_{n\in{\mathbb{N}}_{0}}\Gamma_{Y}^{(n)}.
\]
This space is equipped with the topology of disjoint unions. Therefore
one can introduce the corresponding Borel $\sigma$-algebra ${\mathcal{B}}(\Gamma_{0,Y})$.
In the case of $Y={{\mathbb{R}}^{d}}$ we will omit the index $Y$
in the notation, thus
$\Gamma_{0}:=\Gamma_{0,{{\mathbb{R}}^{d}}}$
$\Gamma^{(n)}:=\Gamma_{{{\mathbb{R}}^{d}}}^{(n)}$.

The restriction of the Lebesgue product measure $(dx)^{n}$ to $\bigl(\Gamma^{(n)},{\mathcal{B}}(\Gamma^{(n)})\bigr)$
will be denoted by $m^{(n)}$, and we set
$m^{(0)}:=\delta_{\{\emptyset\}}$.
The Lebesgue--Poisson measure $\lambda$ on $\Gamma_{0}$ is defined
by
\begin{equation}
\lambda:=\sum_{n=0}^{\infty}\frac{1}{n!}m^{(n)}.\label{eq:lp_measure}
\end{equation}
For any $\Lambda\in{\mathcal{B}}_{b}({{\mathbb{R}}^{d}})$, the restriction
of $\lambda$ to $\Gamma_{\Lambda}:=\Gamma_{0,\Lambda}$ will be also
denoted by $\lambda$. The space $\bigl(\Gamma,{\mathcal{B}}(\Gamma)\bigr)$
is the projective limit
of the family of spaces $\big\{(\Gamma_{\Lambda},{\mathcal{B}}(\Gamma_{\Lambda}))\big\}_{\Lambda\in{\mathcal{B}}_{b}({{\mathbb{R}}^{d}})}$.  The Poisson
measure $\pi$ on $\bigl(\Gamma,{\mathcal{B}}(\Gamma)\bigr)$ is given
as the projective limit of the family of measures
$\{\pi^{\Lambda}\}_{\Lambda\in{\mathcal{B}}_{\mathrm{b}}({{\mathbb{R}}^{d}})}$,
where $\pi^{\Lambda}:=e^{-m(\Lambda)}\lambda$ is the probability
measure on $\bigl(\Gamma_{\Lambda},{\mathcal{B}}(\Gamma_{\Lambda})\bigr)$.
Here $m(\Lambda)$ is the Lebesgue measure of $\Lambda\in{\mathcal{B}}_{b}({{\mathbb{R}}^{d}})$.

For any measurable function $f:{{\mathbb{R}}^{d}}\rightarrow{\mathbb{R}}$
we define a \emph{Lebesgue--Poisson exponent}
\begin{equation}
e_{\lambda}(f,\eta):=\prod_{x\in\eta}f(x),\quad\eta\in\Gamma_{0};\quad e_{\lambda}(f,\emptyset):=1.\label{eq_lp_exponent}
\end{equation}
Then, by (\ref{eq:lp_measure}), for $f\in L^{1}({{\mathbb{R}}^{d}},dx)$
we obtain $e_{\lambda}(f)\in L^{1}(\Gamma_{0},d\lambda)$ and
\begin{equation}
\int_{\Gamma_{0}}e_{\lambda}(f,\eta)\,d\lambda(\eta)=\exp\left(\int_{\mathbb{R}^{d}}f(x)\,dx\right).\label{LP-exp-mean}
\end{equation}

A set $M\in{\mathcal{B}}(\Gamma_{0})$ is called bounded if there
exists $\Lambda\in{\mathcal{B}}_{b}({{\mathbb{R}}^{d}})$ and $N\in{\mathbb{N}}$
such that $M\subset\bigsqcup_{n=0}^{N}\Gamma_{\Lambda}^{(n)}$. We will
make use of the following classes of functions on $\Gamma_{0}$:
(i) $L_{\mathrm{ls}}^{0}(\Gamma_{0})$
is the set of all measurable functions on $\Gamma_{0}$ which have
local support, i.e., $G\in L_{ls}^{0}(\Gamma_{0})$, if there exists
$\Lambda\in\mathcal{B}_{b}(\mathbb{R}^{d})$ such that $G\upharpoonright_{\Gamma_{0}\setminus\Gamma_{\Lambda}}=0$, while
(ii) $B_{bs}(\Gamma_{0})$ is the set of bounded measurable functions with
bounded support, i.e., $G\upharpoonright_{\Gamma_{0}\setminus B}=0$
for some bounded $B\in\mathcal{B}(\Gamma_{0})$.

In fact, any
${\mathcal{B}}(\Gamma_{0})$-measurable function $G$ on $\Gamma_{0}$
is a sequence of functions $\bigl\{ G^{(n)}\bigr\}_{n\in{\mathbb{N}}_{0}}$,
where $G^{(n)}$ is a ${\mathcal{B}}(\Gamma^{(n)})$-measurable function
on $\Gamma^{(n)}$.

On $\Gamma$ we consider the set of cylinder functions $\mathcal{F}_{cyl}(\Gamma)$.
These functions are characterized by the
relation
$F(\gamma)=F\upharpoonright_{\Gamma_{\Lambda}}(\gamma_{\Lambda})$.

The following mapping from $L_{ls}^{0}(\Gamma_{0})$ into
${{\mathcal{F}}_{cyl}}(\Gamma)$
which  plays the key role in our further considerations:
\begin{equation}
KG(\gamma):=\sum_{\eta\Subset\gamma}G(\eta),\quad\gamma\in\Gamma,\label{eq:k-transform}
\end{equation}
where $G\in L_{ls}^{0}(\Gamma_{0})$.  (See, for example,
\cite{KK99}, \cite{Le75a,Le75b}). The summation in
(\ref{eq:k-transform}) is taken over all finite sub-configurations
$\eta\in\Gamma_{0}$ of the (infinite) configuration
$\gamma\in\Gamma$; this relationship is represented symbolically
by $\eta\Subset\gamma$. The mapping $K$ is linear, positivity
preserving, and invertible, with
\begin{equation}
K^{-1}F(\eta):=\sum_{\xi\subset\eta}(-1)^{|\eta\setminus\xi|}F(\xi),\quad\eta\in\Gamma_{0}.\label{k-1trans}
\end{equation}
Here and in the sequel, inclusions like $\xi\subset\eta$ hold for
$\xi=\emptyset$ as well as for $\xi=\eta$. We denote the restriction
of $K$ onto functions on $\Gamma_{0}$ by $K_{0}$.

A measure $\mu\in{\mathcal{M}}_{\mathrm{fm}}^{1}(\Gamma)$ is called
locally absolutely continuous with respect to
(w.r.t.) a
Poisson measure $\pi$ if for any $\Lambda\in{\mathcal{B}}_{b}({{\mathbb{R}}^{d}})$
the projection of $\mu$ onto $\Gamma_{\Lambda}$ is absolutely continuous
w.r.t.~projection of $\pi$ onto $\Gamma_{\Lambda}$. By \cite{KK99},
there exists in this case a \emph{correlation functional} $k_{\mu}:\Gamma_{0}\rightarrow{\mathbb{R}}_{+}$
such that the following equality holds for any $G\in B_{bs}(\Gamma_{0})$:
\begin{equation}
\int_{\Gamma}(KG)(\gamma)\,d\mu(\gamma)=\int_{\Gamma_{0}}G(\eta)k_{\mu}(\eta)\,d\lambda(\eta).\label{eq:corr_funct}
\end{equation}
Restrictions $k_{\mu}^{(n)}$ of this functional on $\Gamma_{0}^{(n)}$,
$n\in{\mathbb{N}}_{0}$, are called \emph{correlation functions} of
the measure $\mu$.
Note that $k_{\mu}^{(0)}=1$.

\section{Mesoscopic statistical dynamics}

\label{sec:general_scheme}

In this section we introduce the general scheme of Vlasov scaling for the Markov dynamics of
IPS -- interacting particle systems --
on configuration space.  Thus we assume
that the initial distribution (the state of particles) in our system is
a probability measure $\mu_{0}\in\mathcal{M}^{1}(\Gamma)$ with corresponding
correlation function $k_{0}=(k_{0}^{(n)})_{n=0}^{\infty}$. The distribution
of particles at time $t>0$ is the measure $\mu_{t}\in\mathcal{M}^{1}(\Gamma)$,
and $k_{t}=(k_{t}^{(n)})_{n=0}^{\infty}$ its correlation function.
If the evolution of states $(\mu_{t})_{t\geq0}$ is determined a priori
by a heuristic Markov generator $L$, then $\mu_{t}$ is the solution
of the forward Kolmogorov equation (or Fokker--Plank equation (FPE)),
\begin{equation}
\begin{cases}
\frac{\partial\mu_{t}}{\partial t} & =L^{*}\mu_{t},\\
\mu_{t}|t_{=0} & =\mu_{0},
\end{cases}\label{eq:FPe}
\end{equation}
where $L^{*}$ is the adjoint operator.
In terms of the time-dependent correlation functions $(k_{t})_{t\geq0}$
corresponding to $(\mu_{t})_{t\geq0}$, the FPE may be rewritten
as an infinite system of evolution equations
\begin{equation}
\begin{cases}
\frac{\partial k_{t}^{(n)}}{\partial t} &
=(L^{\triangle}k_{t})^{(n)},\\ k_{t}^{(n)}|_{t=0} &
=k_{0}^{(n)},\quad n\geq0,
\end{cases}\label{eq:hierarchy}
\end{equation}
where $L^{\triangle}$ is the image of $L^{*}$ in a Fock-type space
of vector-functions $k_{t}=(k_{t}^{(n)})_{n=0}^{\infty}$. In
applications to concrete models, the expression for the operator
$L^{\triangle}$ is obtained from the operator $L$ via combinatorial
calculations (cf.~\cite{KK99}). The following diagram

\[
\xymatrix{L\ar[r]^{\mathrm{duality}}\ar[dd]_{K} & L^{*}\ar[dd]^{K^{*}}\\
\\
\hat{L}=K^{-1}LK & L^{\triangle}=\hat{L}^{*}=K^{*}L^{*}(K^{-1})^{*}
}
\]

\noindent
describes the relationships.

The evolution equation (\ref{eq:hierarchy}) is nothing but a hierarchical
system of equations to the Markov generator $L$. This system is the
analogue of the BBGKY-hierarchy of the Hamiltonian dynamics \cite{Bo62}.

Our interest now turns to Vlasov-type scaling of stochastic dynamics for
the IPS in a continuum. This scaling leads to
so-called kinetic description of the considered model. In  the language of theoretical physics
we are dealing with a mean-field type scaling
which is adopted to preserve the spatial structure. In addition, this scaling will lead
to the limiting hierarchy, which possesses a chaos
propagation property. In other words, if the initial distribution
is Poisson (non-homogeneous) then the time evolution of states will
maintain this property.  We refer to \cite{FKK10} for a
general approach, concrete examples, and additional references.

There exists a standard procedure for deriving Vlasov scaling $L_{V}^{\triangle}$
 from the generator  $L^{\triangle}$
in (\ref{eq:hierarchy}).  Heuristically,  $L_{V}^{\triangle}$ corresponds to a (non-Markov) generator
 $L_{V}$  on observables which may be reconstructed form  $L_{V}^{\triangle}$ just on the level of combinatorial
 calculations.    All together, it  gives us the  following chain of transformed operators:

\[
\xymatrix{L\ar[rr] &  &  L_{V} \ar[rr]  & & L_{V}^{*}\ar[rr]  &  & L_{V}^{\triangle}.}
\]

\noindent
The specific type of scaling is dictated by the model in
question. The process leading from $L^{\triangle}$ to
$L_{V}^{\triangle}$  produces a non-Markovian generator  $L_V$
since it lacks the positivity-preserving property. Therefore
instead of (\ref{eq:FPe}) we consider the following kinetic FPE,
\begin{equation}
\begin{cases}
\frac{\partial\mu_{t}}{\partial t} & =L_{V}^{*}\mu_{t},\\
\mu_{t}|t_{=0} & =\mu_{0},
\end{cases}\label{eq:FPe1}
\end{equation}
and observe that if the initial distribution satisfies
$\mu_{0}=\pi_{\rho_{0}}$,
then the solution is of the same type, i.e., $\mu_{t}=\pi_{\rho_{t}}$.

In terms of correlation functions, the kinetic FPE (\ref{eq:FPe1})
gives rise to the following Vlasov-type hierarchical chain (Vlasov hierarchy)
\begin{equation}
\begin{cases}
\frac{\partial k_{t}^{(n)}}{\partial t} &
=(L_{V}^{\triangle}k_{t})^{(n)},\\ k_{t}^{(n)}|_{t=0} &
=k_{0}^{(n)},\quad n \geq 0.
\end{cases}\label{eq:vlasov_hierarchy}
\end{equation}

\begin{rem}
\label{rem:Leb_Poi_exp}
\begin{enumerate}
\item[]{1.}
In applications it is important to consider the Lebesgue--Poisson
expo\-nents
$k_{0}(\eta)=e_{\lambda}(\rho_{0},\eta)=\prod_{x\in\eta}\rho_{0}(x)$
as the initial condition. The scaling $L_{V}^{\triangle}$ should
be such  that the dynamics $k_{0}\mapsto k_{t}$ preserves this
structure, or more precisely, $k_{t}$ should be of the same type
\begin{equation}
k_{t}(\eta)=e_{\lambda}(\rho_{t},\eta)=\prod_{x\in\eta}\rho_{t}(x),\quad\eta\in\Gamma_{0}.\label{eq:chaotic_preservation}
\end{equation}

\item[]{2.} Relation (\ref{eq:chaotic_preservation}) is known as the \emph{chaos
preservation property} of the Vlasov hierar\-chy. It turns out
that equation (\ref{eq:chaotic_preservation}) implies, in general,
a non-linear differential equation
\begin{equation}
\frac{\partial\rho_{t}(x)}{\partial t}=\vartheta(\rho_{t})(x),\quad x\in\mathbb{R}^{d},\label{eq:nl-diff-eq-density}
\end{equation}
for $\rho_{t}$,
which is called the \emph{Vlasov-type kinetic equation}.
\end{enumerate}
\end{rem}

\begin{rem}
In general, if one does not start with a Poisson measure, the solution will leave the space $\mathcal{M}^{1}(\Gamma)$. To have a
bigger class of initial measures, we may consider
the cone inside  $\mathcal{M}^{1}(\Gamma)$
generated by convex combinations of Poisson measures, denoted by $\mathbb{P}(\Gamma)$.
\end{rem}

We would now like to generalize the above general scheme to obtain
the analog of \emph{kinetic fractional statistical dynamics} (or equivalently
\emph{mesoscopic fractional statistical dynamics}). It would be tempting simply to
replace the usual time derivative in equation (\ref{eq:nl-diff-eq-density})
by a time fractional derivative. Because the equation (\ref{eq:nl-diff-eq-density})
in general is non-linear, it is then much harder to obtain a solution.
But more essential is the question of the meaning of such an equation.
A naive use of the fractional derivative in the Vlasov equation is
not justified by the microscopic dynamics and its scaling.
Our alternative approach to realizing this generalization is described in the following section.

\section{Fractional statistical dynamics}

\label{sec:mfsd}

The procedure of Section\ \ref{sec:general_scheme} is suitable for describing
non-Markov evolutions. More precisely, in the FPE (\ref{eq:FPe})
we change the usual time derivative by the Caputo--Djrbashian
fractional time derivative $\mathbb{D}_{t}^{\alpha}$ (CDfd for short)
and then study the corresponding fractional dynamics.

In order to proceed, we first have to define the CDfd. Let $f:\mathbb{R}_{+}\longrightarrow\mathbb{R}$
be given; then the CDfd of $f$ is given in the Laplace transform
domain by
\[
\big(\mathcal{L}\mathbb{D}_{t}^{\alpha}f\big)(s)=s^{\alpha}(\mathcal{L}f)(s)-s^{\alpha-1}f(0),\quad s>0,\quad \alpha\in(0,1],
\]
where $\mathcal{L}f$ denotes the Laplace transform of $f$
\[
(\mathcal{L}f)(s)=\int_{0}^{\infty}e^{-st}f(t)\,dt.
\]
Another possible representation of the CDfd is
\[
\big(\mathbb{D}_{t}^{\alpha}f\big)(t)=\frac{1}{\Gamma (1-\alpha)}\frac{d}{dt}\int_{0}^{t}\frac{f(\tau)-f(0)}{(t-\tau)^{\alpha}}\,d\tau,\quad0<\alpha<1.
\]
In case $f$ is absolutely continuous, we have
\[
\big(\mathbb{D}_{t}^{\alpha}f\big)(t)=\frac{1}{\Gamma (1-\alpha)}\int_{0}^{t}
\frac{f'(\tau)}{(t-\tau)^{\alpha}}\,d\tau,\quad 0<\alpha<1.
\]
The definition of the CDfd has natural extensions to vector-valued or
measure valued functions on $\mathbb{R}_{+}$. We refer to the monographs
\cite{Podlubny99} and \cite{KST2006} for more details
and references concerning the CDfd.

We will introduce the fractional statistical dynamics for a given
Markov generator $L$ by changing the time derivative in the FPE to the CDfd.
The resulting fractional Fokker--Planck dynamics (if it exists!) will
act in the space of states on $\Gamma$, i.e., it
will preserve probability measures on $\Gamma$. The fractional
Fokker--Planck equation
\[
\begin{cases}
\mathbb{D}_{t}^{\alpha}\mu_{t}^{\alpha} &
=L^{*}\mu_{t}^{\alpha},\\ \mu_{t}^{\alpha}|_{t=0} &
=\mu_{0}^{\alpha}.
\end{cases}\tag{FFPE}
\]
describes a dynamical system with memory in the space of measures
on $\Gamma$. The corres\-ponding evolution no longer has the semigroup
property. However, if the solution $\mu_{t}$ of equation (\ref{eq:FPe1})
exists, then the subordination principle (see \cite{Pruss12}, \cite{Baz01,Bazhlekova00}
and references therein) gives the solution of the equation FFPE,
namely
\begin{equation}
\mu_{t}^{\alpha}=\int_{0}^{\infty}\Phi_{\alpha}(\tau)\mu_{\tau t^{\alpha}}\,d\tau.\label{eq:subordination1}
\end{equation}
Here $\Phi_{\alpha}(z)$ is the Wright function
\[
\Phi_{\alpha}(z):=\sum_{n=0}^{\infty}\frac{(-z)^{n}}{n!\Gamma(-\alpha n+1-\alpha)},
\]
a probability density function in $\mathbb{R}_+$). It is known
(see, for example, \cite{Gorenflo1999} and
\cite{Mainardi_Mura_Pagnini_2010}) that
\[
\Phi_{\alpha}(t)\geq0,\quad t>0,\quad\int_{0}^{\infty}\Phi_{\alpha}(t)\,dt=1,
\]
and that the moments of $\Phi_{\alpha}$ are given by
\begin{equation}\label{Mmoments}
\int_{0}^{\infty}t^{\delta}\Phi_{\alpha}(t)\,dt=\frac{\Gamma(\delta+1)}{\Gamma(\alpha\delta+1)},\quad\delta>-1.
\end{equation}
Its Laplace transform is given by
\[
\int_{0}^{\infty}e^{-\tau t}\Phi_{\alpha}(\tau)\,d\tau=E_{\alpha}(-t),\quad t>0,
\]
where $E_{\alpha}$ is the Mittag--Leffler function (see \cite{KST2006}):
\[
E_{\alpha}(z):=\sum_{n=0}^{\infty}\frac{z^{n}}{\Gamma(\alpha n+1)}.
\]
An application of the subordination principle may be justified in many particular models where
the evolution of correlation functions may be constructed by means a $C_0$-semigroup in a proper
Banach space. In general, the subordination formula  may be considered as a rule for the transformation
of Markov dynamics to fractional ones.

It is easy to see that  $\mu_{t}^{\alpha}$ is a measure. Actually, positivity follows from
that fact that for any measurable set $A$ we have
\[
\mu_{t}^{\alpha}(A)=\int_{0}^{\infty}\Phi_{\alpha}(t,s)\mu_{s}(A)\,ds\geq0,
\]
since $\mu_{s}$ is a measure and $\Phi_{\alpha}$ is a pdf. The
$\sigma$-additivity property may be verified using the standard procedure.
The FFPE equation may be written in terms of time-dependent correlation
functions as an infinite system of evolution equations, the so-called
\emph{hierarchical chain}:
\[
\begin{cases}
\mathbb{D}_{t}^{\alpha}k_{\alpha,t}^{(n)} &
=(L^{\triangle}k_{\alpha,t})^{(n)},\\ k_{\alpha,t}^{(n)}|_{t=0} &
=k_{\alpha,0}^{(n)},\quad n\geq0.
\end{cases}
\]
The evolution of the correlation functions should also be given by the
subordination principle. More precisely, if the solution $k_{t}$
of equation (\ref{eq:vlasov_hierarchy}) exists, then we have
\[
k_{\alpha,t}=\int_{0}^{\infty}\Phi_{\alpha}(\tau)k_{\tau t^{\alpha}}\,ds.
\]

\section{ Fractional kinetics and Poisson flows}

As in the case of Markov statistical dynamics addressed above, we may consider Vlasov-type scaling in the framework of the FFPE.
We know that the kinetic statistical dynamics for a Poisson initial
state $\pi_{\rho_0}$ is given by a flow of Poisson measures
\[
\mathbb{R}_{+}\ni t\mapsto\mu_{t}=\pi_{\rho_{t}}\in\mathcal{M}^{1}(\Gamma),
\]
where $\rho_t$ is the solution to the corresponding Vlasov kinetic
equation. Then the fractional kinetic dynamics of states may be
defined as the subordination of this flow (see comments above).  Specifically,
for $0<\alpha<1$ we consider the subordinated flow
\[
\mu_{t}^{\alpha}:=\int_{0}^{\infty}\Phi_{\alpha}(\tau)\mu_{\tau t^{\alpha}}\,d\tau=\int_{0}^{\infty}\Phi_{\alpha}(\tau)\pi_{\rho_{\tau t^{\alpha}}}\,d\tau.
\]
The family of measures $ \mu_{t}^{\alpha}$ is
no longer
a Poisson flow. We would like to analyze the properties of these subordinated flows to distinguish the
effects of fractional evolution.
Note first that the density
of the fractional kinetic state is given by the formula
$$
\rho_{t}^{\alpha}(x) =\int_{0}^{\infty}\Phi_{\alpha}(\tau)\rho_{\tau t^{\alpha}} (x)\,d\tau.
$$
The latter is the subordination of the solution to the Vlasov equation and is not related to a
fractional Vlasov equation as it is expected in several heuristic considerations in physics.

It is reasonable to study the properties of subordinated flows from a more general point
of view when the evolution of densities $\rho_t(x)$ is not necessarily
related to a particular Vlasov-type kinetic  equation. Similar transformations of Poisson flows do appear due to completely
different
motivations in several applications.  See, for example,
\cite{Silva-Oliveira2012, OM15} and, for the
related fractional Poisson process, \cite{MR1910034, MR2007003, MGS04,
Mainardi-Gorenflo-Vivoli-07, Uchaikin2008, Meerschaert2011},
and references therein.

Below we will study certain properties of the resulting flows affected
by fractional dynamics.

\subsection{Front propagation for the density }

Let us consider a density evolution of the form
 $$
 \rho_{t}(x)= {1\!\!1}_{[-1-vt, 1+vt]}(x), \quad t\geq 0,\quad x\in \mathbb{R},
 $$
 where $v>0$ is the constant speed of the density front. The subordinated density has the following
representation
 $$
 \rho_{t}^{\alpha}(x) =\int_{0}^{\infty}\Phi_{\alpha}(\tau){1\!\!1}_{[-1-vt^{\alpha}\tau, 1+vt^{\alpha}\tau]}(x) \,d\tau,
 $$
 and for $|x|>1$
 $$
 \rho_{t}^{\alpha}(x) =\int_{A(x,t)}^{\infty}\Phi_{\alpha}(\tau)\, d\tau,
 $$
 where
 $$
 A(x,\tau)= \frac{|x|-1}{vt^{\alpha}}.
 $$
 We have  $ \rho_{t}^{\alpha}(x)\to 1, t\to \infty, |x|>1$ and $ \rho_{t}^{\alpha}(x)\to 0, x\to \infty, t\geq0$.
Consider

 $$
 \Psi_{\alpha}(s)= \int_{s}^{\infty}\Phi_{\alpha}(\tau)\, d\tau.
 $$
 Due to monotonicity we may find a
unique $s_{\alpha}$ s.t.\ $ \Psi_{\alpha}(s_\alpha)=1/2$.
 Define the front of $\rho_{t}^{\alpha}$ for given $t>0$ as
$x\in \mathbb{R}$, for which $\rho_{t}^{\alpha}(x)=1/2$.
The motion of the front is then given by the formula
 $$
 |x|=1+s_{\alpha}vt^{\alpha}.
 $$
The latter result means that in the subordinated dynamics the density will be
expanded sub-linearly
 and more slower for smaller $\alpha\in (0,1)$.

 \subsection{Intermittency for subordinated flows}

 Each measure from the flow $  \mu_{t}^{\alpha}$ defines a generalized random process on
 ${\mathbb{R}}^{d}$ given for $f\in C_0(\mathbb{R}^d)$ by
 $$
 X_f(\gamma)= \sum_{x\in \gamma} f(x),\quad \gamma\in \Gamma.
 $$
 Let us consider the corresponding moments
 $$
 m_t^p(f)= \int X_{f}^{p}\, d \mu_{t}^{\alpha},\quad p\geq 1.
 $$
 The notion of asymptotic intermittency is well understood for  regular random fields; see for example
\cite{Carmona1994, Carmona1995}. In the case of generalized random fields this notion may be formulated as  follows.

 \begin{definition}{\rm(}Intermittency via moments{\rm)}.
 The flow  $\mu_{t}^{\alpha}, t\geq 0$ has the asymptotic intermittency property if for any
 $0\leq f \in C_0(\mathbb{R}^d)$ and for
all $p_1,\dots, p_n\in \mathbb{N}$ with
 $p_1+\cdots +p_n =p$
one had
 $$
 \lim_{t \to \infty} \frac{m_t^p (f)}{m_t^{p_1} (f) \dots m_t^{p_n} (f)} =\infty.
 $$
 \end{definition}
 This property means that moments of the random field grow in time progressively with the order.
 In the case of random point processes the leading growth of moments is defined in terms of correlation functions of the corresponding orders.

This gives us the option of reformulating the definition of asymptotic
intermittency in terms more convenient for our purposes.

 \begin{definition}{\rm(}Intermittency via correlation functions{\rm)}.\label{def:Intermittency_corr}
  The flow  $\mu_{t}^{\alpha}, t\geq 0$ has the   \break   asymptotic intermittency property if for any
  $\eta\in \Gamma_0$ and its decomposition $\eta = \eta_1 \cup \dots \cup \eta_n$ in disjunct subsets for the correlation function $k_{\mu_{t}^{\alpha}}$, one has
  $$
  \lim_{t\to\infty} \frac{ k_{\mu_{t}^{\alpha}}(\eta)}{k_{\mu_{t}^{\alpha}} (\eta_1)\dots k_{\mu_{t}^{\alpha}}(\eta_n)} =\infty.
  $$
  \end{definition}
For  a detailed discussion of relations between different versions of the intermittency property for
  random point processes, see \cite{KochKon}.

 Let us consider  the dynamics of the  density given by  $\rho_{t}(x)=e^{\beta t^{\sigma}}$,
$\beta,\sigma>0$. The flow  of Poisson measures $\pi_{\rho_{t}}$ has, for each $t\geq 0$,
correlation functions $k_{\pi_{\rho_{t}}}^{(n)}(x_1,\dots, x_n)
= e^{\beta n t^{\sigma}}$.  Therefore the intermittency is absent.

\begin{theorem}\label{thm:subexpgrowth}
Let $0<\alpha<1$ be given. Consider the subordinated flow for the Poisson
flow introduced above,
\[
\mu_{t}^{\alpha}:=\int_{0}^{\infty}\Phi_{\alpha}(\tau)\pi_{\rho_{\tau t^{\alpha}}}\,d\tau.
\]
Assume $\sigma (1-\alpha) <1$. Then the flow $\mu_{t}^{\alpha}$ has the asymptotic intermittency property.
\end{theorem}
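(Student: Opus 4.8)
The plan is to reduce the statement to the large-time behaviour of the correlation functions of $\mu_t^\alpha$ and then to pin down their exponential order. First I would invoke the defining relation (\ref{eq:corr_funct}) together with Fubini's theorem: since the correlation functional depends linearly on the underlying measure, for any $G\in B_{bs}(\Gamma_0)$ one has $\int_\Gamma KG\,d\mu_t^\alpha=\int_0^\infty\Phi_\alpha(\tau)\big(\int_{\Gamma_0}Gk_{\pi_{\rho_{\tau t^\alpha}}}\,d\lambda\big)\,d\tau$, whence
\[
k_{\mu_t^\alpha}(\eta)=\int_0^\infty\Phi_\alpha(\tau)\,k_{\pi_{\rho_{\tau t^\alpha}}}(\eta)\,d\tau .
\]
Because the density $\rho_s(x)=e^{\beta s^\sigma}$ is constant in $x$, the Poisson correlation functions factorise as $k_{\pi_{\rho_s}}(\eta)=e^{\beta|\eta|s^\sigma}$, so with $s=\tau t^\alpha$ the right-hand side depends on $\eta$ only through $n:=|\eta|$:
\[
k_{\mu_t^\alpha}(\eta)=I_n(t):=\int_0^\infty\Phi_\alpha(\tau)\,e^{\beta n\tau^\sigma t^{\alpha\sigma}}\,d\tau .
\]
Consequently, for a decomposition $\eta=\eta_1\cup\dots\cup\eta_k$ into nonempty disjoint parts with $m_j:=|\eta_j|$ and $m:=\sum_jm_j$, the ratio in Definition \ref{def:Intermittency_corr} becomes $I_m(t)/\prod_{j=1}^kI_{m_j}(t)$, and the whole problem reduces to the asymptotics of $I_n(t)$ as $t\to\infty$.

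Next I would make $I_n(t)$ explicit. Expanding the exponential and integrating term by term with the moment formula (\ref{Mmoments}) gives the (Fox--Wright ${}_1\Psi_1$) series
\[
I_n(t)=\sum_{m=0}^\infty\frac{\Gamma(\sigma m+1)}{\Gamma(\alpha\sigma m+1)}\,\frac{(\beta n\,t^{\alpha\sigma})^m}{m!}.
\]
A ratio test on the coefficients shows that the term ratio behaves like $c_3\,n\,t^{\alpha\sigma}m^{\sigma(1-\alpha)-1}$ for large $m$; hence the series is entire in $t^{\alpha\sigma}$ \emph{precisely} when $\sigma(1-\alpha)<1$, which is the hypothesis. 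Thus $I_n(t)$ is finite for every $t$, and the assumption of the theorem is exactly the convergence condition for the subordination integral.

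The heart of the argument is the exponential order of $I_n(t)$. Here I would apply Laplace's method to the integral representation, using the known tail $\Phi_\alpha(\tau)\sim c_1\tau^{a}\exp(-b\tau^{1/(1-\alpha)})$ as $\tau\to\infty$ (equivalently, the large-argument asymptotics of the ${}_1\Psi_1$ function). The exponent $g(\tau)=\beta n t^{\alpha\sigma}\tau^\sigma-b\tau^{1/(1-\alpha)}$ has, since $\gamma:=\tfrac{1}{1-\alpha}-\sigma>0$, a unique interior maximum at $\tau_*$ with $\tau_*^\gamma=c_2\,n\,t^{\alpha\sigma}$. Differentiating shows at the maximum that the second term equals $\sigma(1-\alpha)$ times the first, so $g(\tau_*)=(1-\sigma(1-\alpha))\,\beta n t^{\alpha\sigma}\tau_*^\sigma>0$, and substituting $\tau_*$ yields
\[
\log I_n(t)=D\,n^{\theta}\,t^{\alpha\sigma\theta}\,(1+o(1)),\qquad \theta:=\frac{1}{1-\sigma(1-\alpha)}>1,
\]
for a constant $D>0$, the subexponential (algebraic) prefactor contributing only an $O(\log t)$ correction to the logarithm. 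I expect this asymptotic estimate --- fixing the order $\theta$ and controlling the prefactor uniformly enough that it cancels in ratios --- to be the main obstacle.

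Finally I would assemble the pieces. Taking logarithms,
\[
\log\frac{I_m(t)}{\prod_{j=1}^kI_{m_j}(t)}=D\Big(m^{\theta}-\sum_{j=1}^k m_j^{\theta}\Big)t^{\alpha\sigma\theta}+O(\log t).
\]
Since $\theta>1$, the power $x\mapsto x^\theta$ is strictly superadditive, so $(\sum_jm_j)^\theta>\sum_jm_j^\theta$ for any nontrivial decomposition (at least two parts, each with $m_j\ge1$); hence the leading coefficient is strictly positive, and because $t^{\alpha\sigma\theta}\to\infty$ the whole expression tends to $+\infty$. Therefore the ratio diverges, which is the asserted asymptotic intermittency. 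The mechanism is transparent: subordination turns the flat exponent $n$ of the Poisson flow (for which $\theta=1$ and no intermittency occurs) into the convex exponent $n^\theta$ with $\theta>1$, and this convexity is exactly what produces the intermittency.
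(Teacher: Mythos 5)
Your argument is correct and follows essentially the same route as the paper: reduce to the ratio of correlation functions $k_{\mu_t^\alpha}(\eta)=\int_0^\infty\Phi_\alpha(\tau)e^{\beta|\eta|\tau^\sigma t^{\alpha\sigma}}\,d\tau$, show that $\log$ of this quantity grows like a constant times $(n z)^{1/\mu}$ with $\mu=1-\sigma(1-\alpha)\in(0,1)$, and conclude by strict superadditivity of $x\mapsto x^{1/\mu}$. The only difference is how the key asymptotics are obtained: the paper cites Braaksma's results on the Fox--Wright series (which also yield the convergence criterion $\sigma<1/(1-\alpha)$), whereas you derive the same exponential order by Laplace's method applied to the integral using the stretched-exponential tail of $\Phi_\alpha$ --- an equivalent and self-contained alternative.
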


\begin{proof}
The $n$-th correlation function of $\mu_{t}^{\alpha}$
is given by
 $$
\begin{aligned}
& (\rho_{t}^{\alpha})^{(n)}(x_{1},\ldots,x_{n})  =  \int_{0}^{\infty}\Phi_{\alpha}(\tau)(\rho_{\tau t^{\alpha}}(x))^{n}
\,d\tau
\\
& \qquad =\int_{0}^{\infty}\Phi_{\alpha}(\tau)e^{n\beta t^{\sigma\alpha}\tau^{\sigma}}\,d\tau
 =  \sum_{k=0}^{\infty}\frac{(n\beta t^{\sigma\alpha})^{k}}{k!}\int_{0}^{\infty}\Phi_{\alpha}(\tau)\tau^{\sigma k}
 \,d\tau
 \\
& \qquad = \sum_{k=0}^{\infty}\frac{(n\beta t^{\sigma\alpha})^{k}}{k!}\frac{\Gamma(\sigma k+1)}{\Gamma(\sigma k\alpha+1)}
  =  \sum_{k=0}^{\infty}\frac{n^{k}z^{k}}{k!}\frac{\Gamma(\sigma k+1)}{\Gamma(\sigma k\alpha+1)},
\end{aligned}
 $$
where $z:=\beta t^{\sigma \alpha}$. It is known
\cite{Braaksma1964} that the series converges for all values of
$z$ if and only if $\sigma <1/(1-\alpha)$, and that $$
(\rho_{t}^{\alpha})^{(n)}(x_{1},\ldots,x_{n})\sim
C(nz)^{1/(2\mu)}\exp \big(c(nz)^{1/\mu }\big), $$ where $C,c>0$,
$\mu =1+\sigma (\alpha -1)$. Since $0<\mu <1$, this asymptotic
behavior implies intermittency. In fact, due to
Definition~\ref{def:Intermittency_corr}, we need to consider the
limiting behavior of the ratio $$ \frac{\exp
(c(nz)^{\frac{1}{\mu}}}{\exp \sum_{k=1}^{m}
\big(c(n_kz)^{\frac{1}{\mu}}\big)},\quad  z=\beta t^{\sigma
\alpha} $$ for $t\to \infty$ under the assumption $\sum_{k=1}^{m}
n_k =n$. This limit is equal $+\infty$ due to the inequality $$
\left(\sum_{k=1}^{m} n_k\right)^{{\frac{1}{\mu}}}   >
\sum_{k=1}^{m} (n_k)^{{\frac{1}{\mu}}} $$ for $1/\mu > 1$ (see
\cite{BB}, Chapter 1, \S\,16).
\end{proof}

\subsection{Polynomially growing density}

Let us now consider the case of a polynomial density
 $\rho_{t}(x)=(1+t)^{p}$, $p\in\mathbb{N}$.  For any $n\in\mathbb{N}$,
the $n$th correlation function is given by
$$
\begin{aligned}
& (\rho_{t}^{\alpha})^{(n)}(x_{1},\ldots,x_{n})  =  \int_{0}^{\infty}\Phi_{\alpha}(\tau)
\rho_{\tau t^{\alpha}}(x_{1})\ldots\rho_{\tau t^{\alpha}}(x_{n})\,d\tau
\\
 & \qquad  =  \int_{0}^{\infty}\Phi_{\alpha}(\tau)(1+\tau t^{\alpha})^{pn}\,d\tau
  =  \sum_{j=0}^{np}\binom{np}{j}t^{\alpha j}\int_{0}^{\infty}\tau^{j}\Phi_{\alpha}(\tau)\,d\tau
  \\
 & \qquad  =  \sum_{j=0}^{np}\binom{np}{j}t^{\alpha j}\frac{\Gamma(j+1)}{\Gamma(j\alpha+1)}
  =  \sum_{j=0}^{np}\frac{(np)!}{(np-j)!}t^{\alpha j}\frac{1}{\Gamma(\alpha j+1)}
  \\
 & \qquad  =  \frac{(np)!}{\Gamma(\alpha np+1)}t^{\alpha np}+o(t^{\alpha np}).
\end{aligned}
$$
In particular, for $n=1$, the 1st correlation function is equal to
\[
(\rho_{t}^{\alpha})^{(1)}=\rho_{t}^{\alpha}=\frac{p!}{\Gamma(\alpha p+1)}t^{\alpha p}+o(t^{\alpha p}).
\]
Therefore we obtain
%
\begin{eqnarray*}
\frac{(\rho_{t}^{\alpha})^{(n)}}{(\rho_{t}^{\alpha})^{n}} & = & \frac{(np)!}{\Gamma(\alpha np+1)}t^{\alpha np}\times\left(\frac{\Gamma(\alpha p+1)}{p!}\right)^{n}\frac{1}{t^{\alpha pn}}+o(1)\\
 & = & \frac{(np)!}{\Gamma(\alpha np+1)}\times\left(\frac{\Gamma(\alpha p+1)}{p!}\right)^{n}+o(1),
\end{eqnarray*}
which is constant as $t$ goes to infinity. In conclusion, the power
growth of the 1st correlation function is not sufficient to organize
intermittency in the subordinated flow. Summarizing the above considerations, we conclude that subordinating
a flow (which corresponds to the dynamics of a system without intermittency)
is a way to {\emph{organize intermittency}}.

\section{Examples}

In this section we apply the general scheme of the  fractional
statistical dynamics developed here to concrete models, namely the \emph{contact model} and the \emph{pure birth model},
also known as the \emph{Surgailis pure birth model}.

\begin{example}{\rm(}Surgailis pure birth model{\rm)}. \label{exa:Surgailis_pbm}
This is an example in which the kinetic fractional statistical dynamics is a mixture of Poisson measures.
The Surgailis pure birth model (zero mortality) has generator given by
\[
(LF)(\gamma)=z\int_{\mathbb{R}^{d}}[F(\gamma\cup x)-F(\gamma)]\,dx.
\]
(cf.\ \cite{Kondratiev2008}). Starting from from the Poisson initial distribution $\mu_{0}=\pi_{\rho_{0}}$, the solution of the FPE
\[
\begin{cases}
\frac{\partial\mu_{t}}{\partial t} & =L^{*}\mu_{t},\\
\mu_{t}|_{t=0} & =\pi_{\rho_{0}}
\end{cases}
\]
is of the same type
\[
\mu_{t}=\pi_{zt+\rho_{0}}.
\]
The
solution of the fractional FPE
\[
\begin{cases}
\mathbb{D}_{t}^{\alpha}\nu_{t}^{\alpha} &
=L^{*}\nu_{t}^{\alpha},\\ \nu_{t}^{\alpha}|_{t=0} &
=\pi_{\rho_{0}}
\end{cases}
\]
is then given by the subordination principle as
\[
\nu_{t}^{\alpha}=\int_{0}^{\infty}\Phi_{\alpha}(s)\pi_{zt^{\alpha}s+\rho_{0}}\,ds.
\]
Hence the solution $\nu_{t}^{\alpha}$, $t>0$ is a mixture of Poisson
measures. The correlation function of the Poisson measure $\pi_{zt^{\alpha}s+\rho_{0}}$
is $((zt^{\alpha}s+\rho_{0})^{n})_{n=0}^{\infty}$, and
therefore the correlation function of the mixture $\nu_{t}^{\alpha}$ is, for $n\geq0$,
\begin{eqnarray*}
r_{t,\alpha}^{(n)} & = & \int_{0}^{\infty}\Phi_{\alpha}(s)(zt^{\alpha}s+\rho_{0})^{n}\,ds\\
 & = & \sum_{j=0}^{n}\binom{n}{j}(zt^{\alpha})^{j}\rho_{0}^{n-j}\int_{0}^{\infty}\Phi_{\alpha}(s)s^{j}\,ds.
\end{eqnarray*}
The absolute moments of $\Phi_{\alpha}$ (cf.\ eq.\ \eqref{Mmoments}) satisfy
\[
\int_{0}^{\infty}s^{j}\Phi_{\alpha}(s)\,ds=\frac{\Gamma(j+1)}{\Gamma(\alpha j+1)},\quad j>-1.
\]
Accordingly, the $n$th order correlation function of the measure $\nu_{t}^{\alpha}$ reduces to
\[
r_{t,\alpha}^{(n)}=\sum_{j=0}^{n}\binom{n}{j}\rho_{0}^{n-j}(zt^{\alpha})^{j}\frac{j!}{\Gamma(\alpha j+1)}=(zt^{\alpha})^{n}\frac{n!}{\Gamma(\alpha n+1)}+o\big((zt^{\alpha})^{n}\big).
\]

In particular
\[
\big(r_{t,\alpha}^{(1)}\big)^{n}=\left(\frac{zt^{\alpha}}{\Gamma(\alpha+1)}\right)^{n}+o\big((zt^{\alpha})^{n}\big),
\]
and thus
\begin{eqnarray*}
\frac{r_{t,\alpha}^{(n)}}{\big(r_{t,\alpha}^{(1)}\big)^{n}} & = & (zt^{\alpha})^{n}\frac{n!}{\Gamma(n\alpha+1)}\times\left(\frac{\Gamma(\alpha+1)}{zt^{\alpha}}\right)^{n}+o((zt^{\alpha})^{n})\\
 & = & \frac{(n-1)!}{\alpha\Gamma(n\alpha)}(\Gamma(\alpha+1))^{n}+o((zt^{\alpha})^{n}).
\end{eqnarray*}
From this we see that as $t\rightarrow\infty$ the above coefficient does
not explode, which tell us that this model has no asymptotic intermittency. In
other words, the power growth of the correlation function corresponding
to the FPE is not sufficient to realize asymptotic intermittency of the kinetic
fractional statistical dynamics. In the next example we show that
under strong growth on the $n$th order correlation function of the
FPE (exponential growth), the kinetic fractional statistical
dynamics does exhibit asymptotic intermittency.
\end{example}

\begin{example}{\rm(}Contact model{\rm)}.
\label{exa:contact-model}The contact model is one of the simplest
models in the theory of IPS.  Nevertheless, it has interesting properties,
e.g., its asymptotic behavior and the structure of its equilibrium measures.  We refer to \cite{FKK10} for more details.

The generator $L$ of the stochastic dynamics is given informally by
\[
(LF)(\gamma)=\sum_{x\in\gamma}m\big(F(\gamma\backslash x)-F(\gamma)\big)+\int_{\mathbb{R}^{d}}b(x,\gamma)\big(F(\gamma\cup x)-F(\gamma)\big)\,dx.
\]
Here $m>0$ is a constant mortality rate and the birth rate is
$$
b(x,\gamma) = \sum_{y\in \gamma} a(x-y),
$$
where $0\leq a \in L^1(\mathbb{R}^d)$ is even.

In the kinetic limit, the correlation functions of the contact model in the super critical regime are given by
\[
r_{t}^{(n)}(x_{1},\ldots,x_{n})=C^{n}e^{\beta nt}
\]
for certain $C,\beta >0$ \cite{Kondratiev2008}, \cite{FKK10}.
The correlation functions of the solution for the fractional kinetic dynamics are
then given by
\[
r_{t,\alpha}^{(n)}=C^{n}\int_{0}^{\infty}\Phi_{\alpha}(s)e^{\beta nt^{\alpha}s}\,ds=C^{n}E_{\alpha}(\beta nt^{\alpha}),\quad n\in\mathbb{N}.
\]
Using the asymptotic behavior of the Mittag-Leffler function $E_{\alpha}$
as $t\rightarrow\infty$ (see eq.\ (6.4) in \cite{Haubold2011}),
we can conclude that the kinetic fractional statistical dynamics
in the contact model does exhibit asymptotic intermittency. Of course, this statement is a particular case
of Theorem~\ref{thm:subexpgrowth} for $\sigma =1$.
\end{example}

{\it{Acknowledgments}}. We would like to thank the financial support of the DFG through  SFB 701, Bielefeld University
and the project I\&D: UID/MAT/04674/2013.

\end{document}